\newlength{\abovecaptionskip}
\newfont{\mycrnotice}{ptmr8t at 7pt}
\newfont{\myconfname}{ptmri8t at 7pt}
\begin{document}

\newtheorem{lemma}{Lemma}
\newtheorem{theorem}{Theorem}
\newtheorem{corollary}{Corollary}
\newtheorem{example}{Example}
\newtheorem{assumption}{Assumption}
\newtheorem{remark}{Remark}

\newdef{definition}{Definition}
\newdef{scenario}{Scenario}


\title{Solving the Join Ordering Problem \\via Mixed Integer Linear Programming}



%
%
%
%

\numberofauthors{1} 

\author{
%
%
\alignauthor
Immanuel Trummer and Christoph Koch\\
			 \email{\{firstname\}.\{lastname\}@epfl.ch}\\
       \affaddr{\'Ecole Polytechnique F\'ed\'erale de Lausanne}
}

\maketitle

\newcommand*{\codeF}{\fontfamily{\sfdefault}\selectfont}
\newcommand*{\branchF}{\fontsize{9}{9}\selectfont}
\newcommand*{\connectorF}{\fontsize{9}{9}\selectfont}

\begin{abstract}
We transform join ordering into a mixed integer linear program (MILP). This allows to address query optimization by mature MILP solver implementations that have evolved over decades and steadily improved their performance. They offer features such as anytime optimization and parallel search that are highly relevant for query optimization. 

We present a MILP formulation for searching left-deep query plans. We use sets of binary variables to represent join operands and intermediate results, operator implementation choices or the presence of interesting orders. Linear constraints restrict value assignments to the ones representing valid query plans. We approximate the cost of scan and join operations via linear functions, allowing to increase approximation precision up to arbitrary degrees. Our experimental results are encouraging: we are able to find optimal plans for joins of 60 tables; a query size that is beyond the capabilities of prior exhaustive query optimization methods.
\end{abstract}


\section{Introduction}
\label{introSec}

From the developer's perspective, there are two ways of solving a hard optimization problem on a computer: either we write optimization code from scratch that is customized for the problem at hand or we transform the problem into a popular problem formalism and use existing solver implementations.
In principle, the first approach could lead to more efficient code as it allows to exploit specific problem properties. Also, we do not require a transformation that might blow up the size of the problem representation. In practice however, our customized code competes against mature solver implementations for popular problem models that have been fine-tuned over decades~\cite{Bixby2012}, driven by a multitude of application scenarios. Using an existing solver reduces the amount of code that needs to be written and we might obtain desirable features such as parallel optimization or anytime behavior (i.e., obtaining solutions of increasing quality as optimization progresses) automatically from the solver implementation. It is therefore in general advised to consider and to evaluate both approaches for solving an optimization problem.

We apply this generic insight to the problem of database query optimization. For the last thirty years, the problem of exhaustive query optimization, more precisely the core problem of join ordering and operator selection~\cite{Selinger1979}, has typically been solved by customized code inside the query optimizer. Query optimizers consist of millions of code lines~\cite{Waas2009} and are the result of thousands of man years worth of work~\cite{Kaushik2009}. The question arises whether this development effort is actually necessary or whether we can transform query optimization into another popular problem formalisms and use existing solvers. We study that question in this paper.

We transform the join ordering problem into a mixed integer linear program (MILP). We select that formalism for its popularity. Integer programming approaches are currently the method of choice to solve thousands of optimization problems from a wide range of areas~\cite{Lawrence1997}. Corresponding software solvers have sometimes evolved over decades and reached a high level of maturity~\cite{Bixby2012}. Commercial solvers such as Cplex\footnote{\url{http://www.ibm.com/software/products/en/ibmilogcpleoptistud}} or Gurobi\footnote{\url{http://www.gurobi.com/}} are available for MILP as well as open source alternatives such as SCIP\footnote{\url{http://scip.zib.de/}}. 

Those solvers offer several features that are useful for query optimization. First of all, they possess the anytime property: they produce solutions of increasing quality as optimization progresses and are able to provide bounds for how far the current solution is from the optimum. Chaudhuri recently mentioned the development of anytime algorithms as one of the relevant research challenges in query optimization~\cite{Chaudhuri2009}. Mapping query optimization to MILP immediately yields an algorithm with that property (note that recently proposed anytime algorithms for multi-objective query optimization~\cite{Trummer2015a} are not applicable to traditional query optimization). Second, MILP solvers already offer support for parallel optimization which is an active topic of research in query optimization as well~\cite{Han2008, Waas2009, Soliman2014}. Finally, the performance of MILP solvers has improved (hardware-independently) by more than factor 450,000 over the past twenty years~\cite{Bixby2012}. It seems entirely likely that those advances can speed up query optimization as well (and anticipating our experimental results, we find indeed classes of query optimization problems where a MILP based approach treats query sizes that are illusory for prior exhaustive query optimization algorithms). 

In summary, by connecting query optimization to integer programming, we benefit from over sixty years of theoretical research and decades of implementation efforts. Even better, having a mapping from query optimization to MILP does not only enable us to benefit from past research but also from all future research and development advances that originate in the fruitful area of MILP. Performance improvements have been steady in the past~\cite{Bixby2012} and, as several major software vendors compete in that market, are likely in the future as well.

Given that integer programming transformations have been proposed for many optimization problems that connect to query optimization~\cite{Agarwal2013, Beame2014, Dokeroglu, Papadomanolakis2007, Yang1997}, it is actually surprising that no such mapping has been proposed for the join ordering problem itself so far. There are even sub-domains of query optimization, notably parametric query optimization~\cite{Ganguly1998, Hulgeri2002, Hulgeri2003} and multi-objective parametric query optimization~\cite{Trummer2015}, where it is common to approximate the cost of query plans via piecewise-linear functions. The purpose here is however to model the dependency of plan cost on unknown parameters while traditional approaches such as dynamic programming are used to find the optimal join order. None of the aforementioned publications transforms the join ordering problem into a MILP and the same applies for additional related work that we discuss in Section~\ref{relatedSec}.



A MILP is specified by a set of variables with either continuous or integer value domain, a set of linear constraints on those variables, and a linear objective function that needs to be minimized. An optimal solution to a MILP is an assignment from variables to values that minimizes the objective function. We sketch out next how we transform the join ordering problem into a MILP.

Left-deep query plans can be represented as follows (we simplify by not considering alternative operator implementations while the extensions are discussed later). For a given query, we can derive the total number of required join operations from the number of query tables. As we know the number of required joins in advance, we introduce for each join operand and for each query table a binary variable indicating whether the table is part of that join operand. We add linear constraints enforcing for instance that single tables are selected for the inner join operands (a particularity of left-deep query plans), that the outer join operands are the result of the prior join (except for the first join), or that join operands have no overlap. The result is a MILP where each solution represents a valid left-deep query plan.


This is not yet useful: we must associate query plans with cost in order to obtain the optimal plan from the MILP solver. The cost of a query plan depends on the cardinality (or byte size) of intermediate results. The cardinality of an intermediate result depends on the selected tables and on the evaluated predicates. We introduce a binary variable for each predicate and each intermediate result, indicating whether the predicate has been evaluated to reduce cardinality. Predicate variables are restricted by linear constraints that make it impossible to evaluate a predicate as long as not all query tables it refers to are present in the corresponding result. The cardinality of the join of a set of tables on which predicates have been evaluated is usually estimated by the product of table cardinalities and predicate selectivities. As we cannot directly represent a product via linear constraints, we focus on the logarithm of the cardinality: the logarithm of a product is the sum of the logarithms of the factors. Based on our binary variables representing selected tables and evaluated predicates, we calculate the logarithm of the cardinality for all intermediate results that appear in a query plan. Based on the logarithm of the cardinality, we approximate the cost of query plans via sets of linear constraints and via auxiliary variables. 

We must approximate cost functions since the cost of standard operators is usually not linear in the logarithm of input and output cardinalities. We can however choose the approximation precision by choosing the number of constraints and auxiliary variables. This allows in principle arbitrary degrees of precision. Also note that there are entire sub-domains of query optimization in which it is standard to approximate plan cost functions via linear functions~\cite{Ganguly1998, Hulgeri2002, Hulgeri2003, Trummer2015}. Approximating plan cost via linear function is therefore a widely-used approach.


Our goal here was to give a first intuition for how our transformation works and we have therefore considered join order alone and in a simplified setting. Later we show how to extend our approach for representing alternative operator implementations, complex cost models taking into account interesting orders and the evaluation cost of expensive predicates, or richer query languages.

We formally analyze our transformation in terms of the resulting number of constraints and variables. In our experimental evaluation, we apply the Gurobi MILP solver to query optimization problems that have been reformulated as MILP problems. We compare against a classical dynamic programming based query optimization algorithm on different query sizes and join graph structures. Our results are encouraging: the MILP approach often generates guaranteed near-optimal query plans after few seconds where dynamic programming based optimization does not generate any plans up to the timeout of one minute. 

The original scientific contributions of this paper are the following:

\begin{itemize}
\item We show how to reformulate query optimization as MILP problem.
\item We analyze the problem mapping and express the number of variables and constraints as function of the query dimensions.
\item We evaluate our approach experimentally and compare against a classical dynamic programming based query optimizer.
\end{itemize}

The remainder of this paper is organized as follows. We discuss related work in Section~\ref{relatedSec}. In Section~\ref{modelSec}, we introduce our formal problem model. Section~\ref{approachSec} describes how we transform query optimization into MILP. We analyze how the size of the resulting MILP problem grows in the dimension of the original query optimization problem in Section~\ref{analysisSec}. In Section~\ref{experimentsSec}, we experimentally evaluate an implementation of our MILP approach in comparison with a classical dynamic programming based query optimization algorithm.

\newpage
\section{Related Work}
\label{relatedSec}

MILP representations have been proposed for many optimization problems in the database domain, including but not limited to multiple query optimization~\cite{Dokeroglu}, index selection~\cite{Papadomanolakis2007}, materialized view design~\cite{Yang1997}, selection of data samples~\cite{Agarwal2013}, or partitioning of data for parallel processing~\cite{Beame2014}. In the areas of parametric query optimization and multi-objective parametric query optimization it is common to model the cost of query plans by linear functions that depend on unknown parameters~\cite{Ganguly1998, Hulgeri2002, Hulgeri2003, Trummer2015}. None of those prior publications formalizes however the join ordering and operator selection problem as MILP. 

Query optimization algorithms can be roughly classified into exhaustive algorithms that formally guarantee to find optimal query plans and into heuristic algorithms which do not possess those formal guarantees. Exhaustive query optimization algorithms are often based on dynamic programming~\cite{Selinger1979, Vance1996a, Moerkotte2006, Moerkotte2008}. We compare against such an approach in our experimental evaluation. 

Our MILP-based approach to query optimization can be used as an exhaustive query optimization algorithm since we can configure the MILP solver to return a guaranteed-optimal solution. The MILP solver can however easily be configured to return solutions that are guaranteed near-optimal (i.e., the cost of the result plan is within a certain factor of the optimum) or to return the best possible plan within a given amount of time. This makes the MILP approach more flexible than typical exhaustive query optimization algorithms. Furthermore, MILP solvers posses the anytime property, meaning that they produce multiple plans of decreasing cost during optimization. The development of anytime algorithms for query optimization has recently been identified as a research challenge~\cite{Chaudhuri2009}. Transforming query optimization into MILP immediately yields anytime query optimization. Note that anytime algorithms for multi-objective query optimization~\cite{Trummer2015a} cannot speed up traditional query optimization with one plan cost metric.

The parallelization of exhaustive query optimization algorithms (not to be confused with query optimization for parallel execution) is currently an active research topic~\cite{Han2008, Han2009, Soliman2014, Waas2009}. MILP solvers such as Cplex or Gurobi are able to exploit parallelism and transforming query optimization into MILP hence yields parallel query optimization as well. The development of parallel query optimizers for new database systems requires generally significant investments~\cite{Soliman2014}; the amount of code to be written can be significantly reduced by using a generic solver as optimizer core. 

Various heuristic and randomized algorithms have been proposed for query optimization~\cite{Bennett1991, Bruno, ioannidis1990randomized, Steinbrunn1997, Swami1988, Swami1989}. In contrast to many exhaustive algorithms, most of them possess the anytime property and generate plans of improving quality as optimization progresses. Those approaches can however not give any formal guarantees at any point in time about how far the current solution is from the optimum. MILP solvers provide upper-bounds during optimization on the cost difference between the cost of the current solution and the theoretical optimum. Such bounds can for instance be used to stop optimization once the distance reaches a threshold. Randomized algorithms do not offer that possibility and the returned solutions may be arbitrarily far from the optimum.

\section{Model and Assumptions}
\label{modelSec}

The goal of query optimization is to find an optimal or near-optimal plan for a given query. It is common to introduce new query optimization algorithms by means of simplified problem models. We also use a simple query and query plan model throughout most of the paper while we discuss extensions to richer query languages and plan models as well. 

In our simplified model, we represent a query as a set $Q$ of tables that need to be joined together with a set $P$ of binary predicates that connect the tables in $Q$ (extensions to nested queries, queries with aggregates, queries with projections, and queries with non-binary predicates will be discussed). For each binary predicate $p\in P$, we designate by $T_1(p),T_2(p)\in Q$ the two tables that the predicate refers to. Predicates can only be evaluated in relations in which both tables they refer to have been joined.

We assume in the simplified problem model that one scan and one binary join operator are available. As we consider binary joins, a query with $n$ tables requires $n-1$ join operations. A query plan is defined by the operands of those $n-1$ join operations, more precisely by the tables that are present in those operands. We consider left-deep plans. For left-deep query plans, the inner operand is always a single table; the outer operand is the result from the previous join (except for the outer operand of the first join which is a single table).

Query plans are compared according to their execution cost. The execution cost of a plan depends on the cardinality of the intermediate results it produces. We write $Card(t)\geq 1$ to designate the cardinality of table $t$ and $Sel(p)\in(0,1]$ to designate the selectivity of predicate $p$. We assume in the simplified model that the cardinality of the join between several tables, after having evaluated a set of join predicates, corresponds to the product of the table cardinalities and the predicate selectivities. We hence assume in the simplified model uncorrelated predicates while extensions to correlated predicates will be discussed. We generally assume that the execution cost of a query plan is the sum of the execution cost of all its operations. We will show how to represent various cost functions. 

We translate the problem of finding a cost-minimal plan for a given query into a mixed integer linear programming problem (MILP). A MILP problem is defined by a set of variables (that can have either integer or continuous value domains), a set of linear constraints on those variables, and a linear objective function on those variables that needs to be minimized. A solution to a MILP is an assignment from variables to values from the respective domain such that all constraints are satisfied. An optimal solution minimizes the objective function value among all solutions.

\section{Join Ordering Approach}
\label{approachSec}


The join ordering problem is usually solved by algorithms that are specialized for that problem and run inside the query optimizer. We adopt a radically different approach: we translate the join ordering problem into a MILP problem that we solve by a generic MILP solver.

MILP is an extremely popular formalism that is used to solve a variety of problems inside and outside the database community. By mapping the join ordering problem into a MILP formulation, we benefit from decades of theoretical research in the area of MILP as well as from solver implementations that have reached a high level of maturity. By linking query optimization to MILP, we make sure that query optimization will from now on indirectly benefit from all theoretical advances and refined implementations that become available in the MILP domain.

We explain in the following our mapping from a join ordering problem to a MILP. We describe the variables and constraints by which we represent valid join orders in Section~\ref{joinOrderSub}. We show how to model the cardinality of join operands in Section~\ref{cardinalitySub}. In Section~\ref{costSub} we associate plans with cost values based on the operand cardinalities.

Note that we introduce our mapping by means of a basic problem model in this section while we discuss extensions to the query language, plan space, and cost model in Section~\ref{extensionsSec}.

\subsection{Join Order}
\label{joinOrderSub}

A MILP program is characterized by variables with associated value domains, a set of linear constraints on those variables, and a linear objective function on those variables that needs to be minimized. Table~\ref{varsLinearTable} summarizes the variables that we require to model join ordering as MILP problem and Table~\ref{leftDeepConstraints} summarizes the associated constraints. We introduce them step-by-step in the following.

\begin{table*}[t!]
\centering
\caption{Variables for formalizing join ordering for left-deep query plans as integer linear program.\label{varsLinearTable}}
\begin{tabular}{lll}
\toprule[1pt]
\textbf{Symbol} & \textbf{Domain} & \textbf{Semantic}\\
\midrule[1pt]
$tio_{tj}$/$tii_{tj}$ & $\{0,1\}$ & If table $t$ is in outer/inner operand of $j$-th join\\
\midrule
$pao_{pj}$ & $\{0,1\}$ & If $p$-th predicate can be evaluated on outer operand of $j$-th join\\
\midrule
$lco_j$ & $\mathbb{R}$ & Logarithm of cardinality of outer operand of $j$-th join\\
\midrule
$cto_{rj}$ & $\{0,1\}$ & If cardinality of outer operand of $j$-th join reaches $r$-th threshold\\
\midrule
$co_j$/$ci_j$ & $\mathbb{R}_{+}$ & Approximated cardinality of outer/inner operand of $j$-th join\\
\bottomrule[1pt]
\end{tabular}
\end{table*}

\begin{table*}[t!]
\centering
\caption{Constraints for join ordering in left-deep plan spaces.\label{leftDeepConstraints}}
\begin{tabular}{lll}
\toprule[1pt]
\textbf{Constraint} & \textbf{Semantic}\\
\midrule[1pt]
$\sum_t tio_{t0}=1$/$\forall j:\sum_t tii_{tj}=1$ & Select one table for outer operand of first join/for all inner operands\\
\midrule
$\forall j\forall t:tio_{tj}+tii_{tj}\leq 1$ & The tables in the join operands cannot overlap for the same join\\
\midrule
$\forall j\geq1\forall t:tio_{tj}=tii_{t,j-1}+tio_{t,j-1}$ & Results of prior join are outer operand for next join\\
\midrule
$\forall p\forall j:pao_{pj}\leq tio_{T_1(p)j};pao_{pj}\leq tio_{T_2(p)j}$ & Predicates are applicable if both referenced tables are in outer operand\\
\midrule
$\forall j:ci_{j}=\sum_t Card(t)tii_{tj}$ & Determines cardinality of inner operand\\
\midrule
$\forall j:lco_j=\sum_t\log(Card(t))tio_{tj}+$& Determines logarithm of outer operand cardinality,\\
$\quad\sum_p\log(Sel(p))pao_{pj}$ & taking into account selected tables and applicable predicates\\
\midrule
$\forall j\forall r:lco_{j}-cto_{rj}\cdot\infty\leq\log(\theta_r)$& Activates threshold flag if cardinality reaches threshold\\
\midrule
$\forall j:co_j=\sum_r cto_{rj}\delta \theta_r$ & Translates activated thresholds into approximate cardinality \\
\bottomrule[1pt]
\end{tabular}
\end{table*}

We start by discussing the variables and constraints that we need in order to represent valid left-deep query plans. Later we discuss the variables and constraints that are required to estimate the cost of query plans.

We represent left-deep query plans for a query $Q$ as follows. For the moment, we assume that only one join operator and one scan operator are available while we discuss extensions in Section~\ref{extensionsSec}. Under those assumptions, a query plan is specified by the join operands. We introduce a set of binary variables $tio_{tj}$ (short for \textit{Table In Outer join operand}) with the semantic that $tio_{tj}$ is one if and only if query table $t\in Q$ appears in the outer join operand of the $j$-th join. We numerate joins from 0 to $j_{max}$ where $j_{max}$ is determined by the number of query tables. Analogue to that, we introduce a set of binary variables $tii_{tj}$ (short for \textit{Table In Inner join operand}) indicating whether the corresponding table is in the inner operand of the $j$-th join. 

The variables representing left-deep plans have binary value domains. Note that not all possible value combinations represent a valid left-deep plan. For instance, we could represent joins with empty join operands. Or we could build plans that join only a subset of the query tables and are therefore incomplete. We must impose constraints in order to restrict the considered value combinations to the ones representing valid and complete left-deep plans. 

Left-deep plans are characterized by the particularity that the inner operand consists of only one table for each join. We capture that fact by the constraint $\sum_t tii_{tj}=1$ which we need to introduce for each join~$j$. A similar constraint restricts the table selections for the outer operand of the first join (join index $j=0$) as only one table can be selected as initial operand. For the following joins (join index $j\geq1$), the outer join operand is always the result of the previous join which is another characteristic of left-deep plans. This translates into the constraints $tio_{tj}=tii_{t,j-1}+tio_{t,j-1}$.

The latter constraint actually excludes the possibility that the same table appears in both operands of a join (since the result of the sum between $tii_{t,j-1}+tio_{t,j-1}$ cannot exceed the maximal value of one for $tio_{tj}$) except for the last join. We add the constraint $tio_{tj_{max}}+tii_{tj_{max}}\leq 1$ for the last join (and optionally for the other joins as well). 

The number of joins is one less than the number of query tables. We join two (different) tables in the first join. After that, each join adds one new table to the set of joined tables since the outer operand contains all tables that have been joined so far, since the inner operand consists of one table, and since inner and outer join operands do not overlap. As a result, we can only represent complete query plans that join all tables. 

We could have chosen a different representation of query plans with less variables. The problem is that we need to be able to approximate the cost of query plans based on that representation using linear functions. Our representation of query plans might at first seem unnecessarily redundant but it allows to impose the constraints that we discuss next. Also note that MILP solvers typically try to eliminate unnecessary variables and constraints in preprocessing steps. This makes it less important to reduce the number of variables and constraints at the cost of readability. 

\begin{example}
We illustrate the representation of left-deep query plans for the join query $R\Join S\Join T$. Answering the query requires two join operations. Hence we introduce six variables $tio_{tj}$ for $t\in\{R,S,T\}$ and $j\in\{0,1\}$ to represent outer join operands and six variables $tii_{tj}$ to represent inner join operands. The join order $(R\Join S)\Join T$ is for instance represented by setting $tio_{R0}=tii_{S0}=1$ and $tio_{R1}=tio_{S1}=tii_{T1}=1$ and setting the other variables representing join operands to zero. This assignment satisfies the two constraints that restrict inner operands to single tables (e.g., $\sum_{t\in\{R,S,T\}}tii_{t1}=1$ for the second join), it satisfies the constraint restricting the outer operand in the first join to a single table ($\sum_{t\in\{R,S,T\}}tio_{t0}=1$), and it satisfies the constraints making the outer operand of the second join equal to the union of the operands in the first join (e.g., $tio_{R1}=tio_{R0}+tii_{R0}$).
\end{example}

\subsection{Cardinality}
\label{cardinalitySub}

Our goal is to find query plans with minimal cost and hence we must associate query plans with a cost value. The execution cost of a query plan depends heavily on the cardinality of intermediate results. We need to represent the cardinality of join operands and join results in order to calculate the cost of query plans. Inner operands consist always of a single table and calculating their cardinality is straight-forward: designating by $ci_{j}$ (short for \textit{Cardinality of Inner operand}) the cardinality of the inner operand of join number~$j$, we simply set $ci_j=\sum_t tii_{tj}Card(t)$ where $Card(t)$ is the cardinality of table $t$.

Calculating cardinality for outer join operands is however non-trivial as we can only use linear constraints: the cardinality of a join result is usually estimated as the product of the cardinalities of the join operands times the selectivity of all predicates that are applied during the join. The product is a non-linear function and does not directly translate into linear constraints. 

We circumvent that problem via the following trick. While cardinality is actually defined as the product of table cardinality values and predicate selectivity values, we represent the logarithm of the cardinality instead and the logarithm of a product is the sum of the logarithms of the factors. More formally, given a set $T\subseteq Q$ of query tables such that the set of predicates $P$ is applicable to $T$ (i.e., for each binary predicate in $P$ the two tables it refers to are included in $T$) and designating by $Card(t)$ for $t\in T$ the cardinality of table $t$ and by $Sel(p)$ the selectivity of predicate $p\in P$, a cardinality estimate is given by $\prod_{t\in T}Card(t)\cdot\prod_{p\in P}Sel(p)$ and the logarithm of the cardinality estimate is $\sum_{t\in T}\log(Card(t))+\sum_{p\in P}\log(Sel(p))$ which is a linear function. 

We introduce the set of variables $lco_{j}$ (short for \textit{Logarithmic Cardinality of Outer operand}) which represents the logarithm of the cardinality of the outer operand of the $j$-th join. The aforementioned linear formula for calculating the logarithm of the cardinality depends on the selected tables as well as on the applicable predicates. The selected tables are directly given in the variables $tio_{tj}$. We introduce additional binary variables to represent the applicable predicates: variable $pao_{pj}$ (short for \textit{Predicate Applicable in Outer join operand}) captures whether predicate $p$ is applicable in the outer operand of the $j$-th join. We currently consider only binary predicates (we discuss extensions later) and as the inner operands consist of single tables, we do not need to introduce an analogue set of predicate variables for the inner operands.

We denote by $T_1(p)$ and $T_2(p)$ the first and the second table that predicate $p$ refers to. A predicate is applicable to an operand whose table set $T$ contains $T_1(p)$ and  $T_2(p)$. We make sure that predicates cannot be applied if one of the two tables is missing by adding for each predicate $p$ and each join $j$ a pair of constraints of the form $pao_{pj}\leq tio_{T_1(p)}$ and $pao_{pj}\leq tio_{T_2(p)}$. We currently assume that predicate evaluations do not incur any cost while extensions are discussed later. Under this assumption, applying a predicate has only beneficial effects as it reduces the cardinality of intermediate results and therefore the cost of the following joins. This means that we only need to introduce constraints preventing the solver from using predicates that are inapplicable but we do not need to add constraints forcing the evaluation of predicates explicitly.

Using the variables capturing the applicability of predicates, we can now write the logarithm of the join operand cardinalities. For outer join operands, we set \[lco_j=\sum_t\log(Card(t))tio_{tj}+\sum_p\log(Sel(p))pao_{pj}\] and thereby take into account table cardinalities as well as predicate selectivities. 

Unfortunately, the cost of most operations within a query plan is not linear in the logarithm of the cardinality values. In the following, we show how to transform the logarithm of the cardinality values into an approximation of the raw cardinality values. This allows to write cost functions that are linear in the cardinality of their input and output. This is sufficient for many but not for all standard operations. Similar techniques to the ones we describe in the following can however be used to represent for instance log-linear cost functions as we describe in more detail in Section~\ref{costSub}.

We must transform the logarithm of the cardinality into the cardinality itself. This is not a linear transformation and hence we resort to approximation. We assume that a set $\Theta=\{\theta_r\}$ of cardinality threshold values has been defined for integer indices $r$ with $0\leq r\leq r_{max}$. In addition, we introduce a set of binary variables $cto_{rj}$ (short for \textit{Cardinality Threshold reached by Outer operand}) that indicate for each join~$j$ and each cardinality threshold value $\theta_r$ whether the cardinality of the outer operand reaches the corresponding threshold value. If threshold $\theta_r$ is reached then the corresponding threshold variable $cto_{rj}$ must take value one and otherwise value zero. To guarantee that the previous statement holds, we introduce constraints of the form $lco_{j}-cto_{rj}\cdot\infty\leq\log(\theta_r)$ for each join~$j$ where $\infty$ is in practice a sufficiently large constant such that the constraint can be satisfied by setting the threshold variable $cto_{rj}$ to one. We do not explicitly enforce that the threshold variable is set to zero in case that the threshold is not reached. The constraints that we introduce next make however sure that the cardinality estimate and therefore the cost estimate increase with every threshold variable that is set to one. Hence the solver will set the threshold variables to zero wherever it can.

Based on the threshold variables, we can formulate a linear approximation for the raw cardinality. We introduce the set of variables $co_{j}$ representing the raw cardinality of the outer operand of the $j$-th join and set $co_j=\sum_r cto_{rj}\delta \theta_r$ where the values $\delta \theta_r$ are chosen appropriately such that if threshold variables $cto_{0j}$ up to $cto_{mj}$ are set to one for some specific join $j$ then the cardinality variable $co_j$ takes a value between $\theta_m$ and $\theta_{m+1}$ (assuming that thresholds are indexed in ascending order such that $\forall r:\theta_r<\theta_{r+1}$). We can for instance set $\delta\theta_r=\theta_r-\theta_{r-1}$ for $r\geq 1$ and $\delta\theta_0=\theta_0$.

\begin{example}
We illustrate how to calculate join operand cardinalities and continue the previous example with join query $R\Join S\Join T$. We have two joins and introduce therefore four variables ($ci_0$, $ci_1$, $co_0$, and $co_1$) representing operand cardinalities. Assume that tables $R$, $S$, and $T$ have cardinalities 10, 1000, and 100 respectively. We calculate the cardinality of the two inner join operands by summing over the variables indicating the presence of a table in an inner operand, weighted by the cardinality values (e.g., $ci_0=10tii_{R0}+1000tii_{S0}+100tii_{T0}$). The cardinality of the outer operands can depend on predicates. Assume that one predicate $p$ is defined between tables $R$ and $S$. We introduce two variables, $pao_{p0}$ and $pao_{p1}$, indicating whether the predicate can be evaluated in the outer operand of the corresponding join. Predicates can be evaluated if both referenced tables are in the corresponding operand. We introduce four constraints (e.g., $pao_{p0}\leq tio_{R0}$ and $pao_{p0}\leq tio_{S0}$) forcing the value of the predicate variable to zero if at least one of the tables is not present. We introduce two variables storing the logarithm of the outer operand cardinality: $lco_0$ and $lco_1$. We assume that the selectivity of $p$ is 0.1. Then the logarithmic cardinality for the first outer join operand is given by $lco_0=1pao_{R0}+3pao_{S0}+2pao_{T0}-1pao_{p0}$, assuming that the logarithm base is 10. To simplify the example, we assume that only two cardinality thresholds are considered: $\theta_0=10$, and $\theta_1=1000$. We introduce four variables $cto_{rj}$ with $r\in \{0,1\}$ and $j\in\{0,1\}$ indicating whether the cardinality of the outer join operand reaches each threshold for the first or second join. Each threshold variable is constrained by one constraint (e.g., $lco_0-\infty\cdot cto_{0,0}\leq 1$). Now we define the cardinality of the outer join operands by constraints such as $co_0=10cto_{0,0}+(1000-10)cto_{1,0}$. This provides a lower bound for the true cardinality. If we know for instance that cardinality values are upper-bounded by 100000 due to the query properties, we can also set $co_0=100cto_{0,0}+(10000-100)cto_{1,0}$. Then the difference between true and approximate cardinality is at most one order of magnitude.
\end{example}

\subsection{Cost}
\label{costSub}

Now we can for instance sum up the cardinalities over all intermediate results ($\sum_{j\geq1}cio_{j}$) and thereby obtain a simple cost metric that is equivalent to the $C_{out}$ cost metric introduced by Cluet and Moerkotte~\cite{Cluet1995}. Join orders minimizing that cost metric were shown to minimize cost according to the cost formulas of some of the standard join operators as well~\cite{Cluet1995}. We will however show in the following how the cost of all standard join operators, namely hash join, sort-merge join, and block nested loop join, can be modeled directly. 

The standard cost formula for a hash join operation is based on the number of pages that the two input operands consume on disk. We designate by $pgo_j$ the number of disk pages consumed by the outer operand of join number $j$ and $pgi_j$ is the analogue value for the inner operand. If a hash join operator is used for the join then its cost is given by $3\cdot(pgo_j+pgi_j)$. This is a linear formula but we must calculate the size of the operands in disc pages. 

The byte size of an intermediate result, and therefore the number of consumed disk pages, depends not only on the cardinality but also on the columns that are present. For the moment, we make the simplifying assumption that each tuple has a fixed byte size. We show how to relax that restriction in the next section. Under this simplifying assumption, we can however express the disk pages of the outer operands as $pgo_j=\lceil co_j\cdot tupSize/pageSize\rceil$ where $tupSize$ is the fixed byte size per tuple and $pageSize$ the number of bytes per disk page. Factor $tupSize/pageSize$ is a constant due to our simplifying assumption and hence we can set $pgo_j=co_j\cdot tupSize/pageSize$ to obtain the approximate number of disk pages. Alternatively, we could write $pgo_j=\sum_{r}\lceil \theta_r\cdot tupSize/pageSize\rceil (cto_{jr}-cto_{j,r+1})$ and approximate it using the threshold variables (the expression $(cto_{jr}-cto_{j,r+1})$ yields value one only for the threshold variable with the highest threshold that is still set to one). Note that the factors of the form $\lceil \theta_r\cdot tupSize/pageSize\rceil$ are constants. The second version has the advantage that we can explicitly control the approximation precision for $pgo_j$ by tuning the number of thresholds. The disc pages for the inner operands can be obtained in a simplified way as each inner operand consists of only one table: we simply set $pgi_j=\sum_t tii_{tj}\lceil Card(t)\cdot tupSize/pageSize\rceil$.

The cost of sort-merge join operators can be approximated in a similar way. We assume here that both inputs must be sorted while we generalize in the next subsection. If both input operands need to be sorted first then the join cost is given by $2pgo_j\lceil\log(pgo_j)\rceil+2pgi_j\lceil\log(pgi_j)\rceil+pgo_j+pgi_j$. We have already shown how to obtain the number of disc pages $pgo_j$ and $pgi_j$. The log-linear numbers of disc pages, $pgo_j\log(pgo_j)$ and $pgi_j\log(pgi_j)$, can be obtained in a similar way. We use the cardinality thresholds for the outer operand and simply sum over tables for the inner operand. 

The cost function for the block nested loop join is given by $\lceil pgo_j/buffer\rceil\cdot pgi_j$ where $buffer$ is the amount of buffer space dedicated to the outer operand. We assume here that pipelining is used while the generalization is straightforward. There are several options for approximating that cost function with linear constraints. We can approximate the join cost function by omitting the ceiling operator and obtain $pgo_j/buffer\cdot pgi_j$. Similar to how we calculated the cardinality of the outer operands, we can switch to a logarithmic representation and write the logarithm of the join cost as $\log(pgo_j)+\log(pgi_j)-\log(buffer)$. Then we can transform the logarithm of the join cost into the raw join cost value using a set of newly introduced threshold variables. 

Another idea is to exploit the specific shape of the inner join operands. As only one table is selected for the inner join operand, we can express join cost by the formula $\sum_t tii_{tj}\cdot pages(t)\cdot blocks_j$ where $pages(t)=\lceil Card(t)\cdot tupSize/pageSize\rceil$ designates the disk page size of table $t$ and $blocks_j=\lceil pgo_j/buffer\rceil\approx pgo_j/buffer$ is the number of iterations of the outer loop executed by the block nested loop join. This is a weighted sum over products between a binary variable (the variables $tii_{tj}$ indicating whether table $t$ was selected for the inner operand of join number $j$) and a continuous variable (the variables $blocks_j$). This formula is hence not directly linear but the product between a binary variable and a continuous variable can be expressed by introducing one auxiliary variable and a set of constraints~\cite{Modeling2015}. The only condition for this transformation is that the continuous variable is non-negative and upper-bounded by a constant. Both is the case (note that we generally only model a bounded cardinality range which implies also an upper bound on the number of loop iterations). The advantage of the second representation is that we only need to introduce a number of variables and constraints that is linear in the number of tables (instead of linear in the number of thresholds like for the first possibility).

We have seen that join orders, the cardinality of intermediate results, and the cost of join operations according to standard cost formulas can all be represented as MILP. In the next section we introduce several extensions of the problem model that we used so far.


\section{Extensions}
\label{extensionsSec}


We introduced our mapping for query plans by means of a basic problem model that focuses on join order. We discuss extensions of the query language, of the query plan model, and of the cost model in this section. 

Note that not all proposed extensions are necessary in each scenario: the basic model introduced in the last section allows for instance to find join orders which minimize the sum of intermediate result sizes. Such join orders are optimal according to many standard operator cost functions~\cite{Cluet1995}. It is therefore in many scenarios possible to obtain good query plans based on the join order that was calculated using the basic model. To transform a join order into a query plan, we choose optimal operator implementations based on the cardinality of the join operands, we evaluate predicates as early as possible (predicate push-down), and we project out columns as soon as they are not required anymore. 

An alternative is to let the MILP solver make some of the decisions related to projection, predicate evaluation, and join operator selection. We show how this can be accomplished if desired. In addition, we discuss extensions of the cost and query model. 

In Section~\ref{predicatesSub}, we discuss how to represent n-ary predicates, correlated predicates, and predicates that are expensive to evaluate. We show how to handle projections in Section~\ref{projectionSub} and in Section~\ref{implementationsSub} we show how the MILP solver can choose between different operator implementations. We show how to handle interesting orders and other intermediate result properties in Section~\ref{propertiesSub}. In Section~\ref{languageExtensions}, we finally discuss how we can extend our approach to handle queries with aggregates and nested queries.

We sketch out the following extensions relatively quickly due to space restrictions. They use however similar ideas as we applied in the last section. Our goal is less to provide a detailed model for each possible scenario but rather to demonstrate that the MILP formalism is flexible enough to cover the most relevant aspects of query optimization.


\subsection{Predicate Extensions}
\label{predicatesSub}

So far we have considered binary predicates. We show how n-ary predicates can be modeled. Let $p$ be an n-ary predicate. N-ary predicates refer to n tables and we designate by $T_1(p)$ to $T_n(p)$ the tables on which $p$ is evaluated. All tables that $p$ refers to must be present in the operands in which $p$ is evaluated. If $pao_{pj}$ indicates whether predicate $p$ can be evaluated in the outer operand of the $j$-th join then we must introduce constraint $pao_{pj}\leq tio_{T_i(p)j}$ for each join and each $i\in\{1,\ldots,n\}$. This forces variables $pao_{pj}$ to zero if at least one table is not present. Note that we must introduce analogue predicate variables for the inner operands for all unary predicates.

In our basic model, we assume that predicates are uncorrelated. Then the accumulated selectivity of a predicate group corresponds always to the product of the selectivity values of the single tables. In reality this is not always the case, even if it is a common simplification to assume uncorrelated predicates. Assume that there is a correlated group $P_{cor}$ of predicates such that the accumulated selectivity of all predicates in $P_{cor}$ differs significantly from their selectivity product. Then we introduce a new predicate $g$ that represents the correlated predicate group. The selectivity $Sel(g)$ is chosen in a way such that $Sel(g)\prod_{p\in P_{cor}}Sel(p)$ yields the correct selectivity, taking correlations into account. So the selectivity of $g$ \textit{corrects} the erroneous selectivity that is based on the assumption of independent predicates. 

Now we just need to make sure that the predicate variable associated with $g$ is set to one in all operands in which all predicates from $P_{cor}$ are selected but not otherwise. We force $pao_{gj}$ to one if all correlated predicates are present by requiring $pao_{gj}\geq 1-|P_{cor}|+\sum_{p\in P_{cor}}pao_{pj}$. We force $pao_{gj}$ to zero if at least one of the correlated predicates is not activated by introducing $n$ constraints of the form $pao_{gj}\leq pao_{pj}$ for $p\in P_{cor}$. No other constraints need to be introduced for $pao_{gj}$ but terms including $pao_{gj}$ must be included in all expressions representing cardinality, byte size, etc.

So far we have assumed that predicate evaluation is not associated with cost. We constrained the variables $pao_{pj}$ only to zero if required tables are not in the operand. We did not explicitly force them to one at any point since, as they reduce cardinality, their evaluation reduces cost and the MILP solver will generally choose to evaluate them as early as possible. 

This model is not always appropriate. If predicate evaluations are expensive then it can be preferable to postpone their evaluation~\cite{Chaudhuri1999, Hellerstein1993, Kemper1994}. The predicate-related variables $pao_{pj}$ influence the cardinality estimates of join operands. They capture whether the corresponding predicate \textit{was} already evaluated as otherwise it cannot influence cardinality. We cannot use those variables directly to incorporate the cost of predicate evaluations. The effect on cardinality of having evaluated a predicate once will persist for all future operations. The evaluation cost needs however only to be payed once. We introduce additional variables $pco_{pj}$ (short for \textit{Predicate evaluation Cost for Outer operand}) and set $pco_{pj}=pao_{p,j+1}-pao_{p,j}$. Intuitively, the predicate was evaluated in the current join if it is evaluated in the input to the next join but not in the input of the current join. The sum $\sum_{j}pco_{pj}co_j$ yields the evaluation cost associated with predicate $p$ (we can additionally weight by a factor that represents predicate evaluation cost per tuple). This is not a linear function as we multiply variables. We have however a product between a binary variable and a continuous variable again. As before, we can transform such expressions into a set of linear constraints and a new auxiliary variable~\cite{Modeling2015}.

Now that evaluation of predicates is not automatically desirable anymore, we must introduce additional constraints making sure that all predicates are evaluated at the end of query execution. Designating by $j_{max}$ the index of the last join, we simply set $pao_{p,j_{max}+1}=1$ by convention. This means that each predicate that was not evaluated before the last join must be evaluated during the last join since $pco_{pj_{max}}=1-pao_{pj_{max}}$. We finally introduce constraints making sure that no predicate is initially evaluated and we introduce constraints making sure that an evaluated predicate remains evaluated. The latter constraints are in fact optional since additional predicate evaluations increase the cost. Depending on the solver implementation, it can nevertheless be beneficial to add such constraints to reduce the search space size.

\subsection{Projection}
\label{projectionSub}

Our cost formulas have so far been based on cardinality alone as we have assumed a constant byte size per tuple. This is of course a simplification and we must in general take into account the columns that we project on and their byte sizes. We designate by $L$ the set of columns over all query tables. By $Byte(l)$ we denote the number of bytes per tuple that column $l\in L$ requires. We introduce one variable $clo_{jl}$ (short for \textit{CoLumn in Outer operand}) for each join~$j$ and each column $l\in L$ to indicate whether column~$l$ is present in the outer operand of join $j$ (and analogue variables for the inner operands). Then a refined formula for the estimated number of bytes consumed by the outer operand is $co_j\cdot\sum_{l\in L}clo_{jl}Byte(l)$. This is the sum over products between a constant ($Byte(l)$), a binary variable ($clo_{jl}$), and a continuous variable that takes only non-negative values ($co_j$). This formula can be expressed using only linear constraints using the same transformations that we used already before~\cite{Modeling2015}. Special rules apply for the inner operand again: for the inner operand, we can estimate the byte size (or any derived measure such as the number of disc pages) by summing over the column variables, weighted by the column byte size as well as by the cardinality of the table that the column belongs to.

We must still constrain the variables $clo_{jl}$ to make sure that only valid query plans can be represented. First of all we must connect columns to their respective tables. If the table associated with a column is not present then the column cannot be present either in a given operand. If column $l$ is associated with table $t$ then the constraint $clo_{jl}\leq tio_{tj}$ forces the column variable to zero if the associated table is not present. Not selecting any columns would be the most convenient way for the optimizer to reduce plan costs. To prevent this from happening, we must enforce that all columns that the query refers to are in the final result. Also, we must enforce that all columns that predicates refer to are present once they are evaluated. We introduced variables indicating the immediate evaluation of a predicate during a specific join. Those are the variables that need to be connected to the columns they require via corresponding constraints. We must also make sure that a column cannot reappear in later joins after it has been projected out (otherwise that would be a convenient way of reducing intermediate result sizes while still satisfying the constraints requiring certain columns in the final result). Introducing constraints of the form $clo_{jl}\geq clo_{j+1,l}$ satisfies that requirement.

\subsection{Choosing Operator Implementations}
\label{implementationsSub}

We have already discussed the cost functions of different join operator implementations in the last section. So far we have however assumed that only one of those cost functions is used to calculate the cost for all joins. This allows to select optimal operator implementations after a good join order, minimizing intermediate result sizes, has been found. We can however also task the MILP solver to pick operator implementations as we outline in the following.

Denote by $I$ the set of join operator implementations. We have shown how to calculate join cost for each of the standard join operators. We can introduce a variable $pjc_{ji}$ (short for \textit{Potential Join Cost}) for each join~$j$ and for each operator implementation $i\in I$ representing the cost of the join if that operator is used. We use the term \textit{potential} since whether that cost is actually counted depends on whether or not the corresponding operator implementation is selected. 

We introduce binary variables $jos_{ji}$ (short for \textit{Join Operator Selected}) to indicate for each operator implementation $i$ and join $j$ whether the operator was used to realize the join. We require that exactly one implementation is selected for each join as expressed by the constraint $\sum_{i}jos_{ji}=1$ that we must introduce for each join. Having the potential cost for each join operator as well as information on which operator is selected, we can for each operator calculate the actual join cost $ajc_{ji}$. The actual join cost associated with one specific operator implementation is \textit{zero} if that operator is not selected. Otherwise (if that operator is selected) the actual join cost corresponds to the potential join cost. We have the following relationship between potential and actual join cost $ajc_{ji}=jos_{ji}\cdot pjc_{ji}$. Here we multiply a binary with a non-negative continuous variable and can apply the same linearization as before~\cite{Modeling2015}. The sum over the actual join cost variables over all operator implementations yields the cost of each join operation.


\subsection{Intermediate Result Properties}
\label{propertiesSub}

Alternative join operator implementations can sometimes produce intermediate results with different physical properties (while the contained data remains the same over all alternative implementations). Tuple orderings are perhaps the most famous example~\cite{Selinger1979}. If tuples are produced in an interesting order then the cost of successive operations can be reduced (e.g., the sorting stage can be dropped for a sort-merge join). Also, the distinction whether an intermediate result is written to disc or remains in main memory is a physical property of that result and influences the cost of successive operations.

Assume that we consider a set $X$ of relevant intermediate result properties. Then we can introduce a binary variable $ohp_{jx}$ (short for \textit{Outer operand Has Property}) indicating whether the outer operand of the $j$-th join has property~$x$. Property $x$ could for instance represent the fact that the corresponding result is materialized. Property~$x$ could also represent one specific tuple ordering.

The properties constrain the operator implementations that can be applied for the next join. We could for instance introduce one operator implementation representing a pipelined block nested loop join while another operator implementation represents a block nested loop join without pipelining. The applicability of the pipelined join would have to be restricted based on whether or not the corresponding input remains in memory. If implementation $i$ requires property $x$ in the outer join operand in order to become applicable then we can impose the constraint $jos_{ji}\leq ohp_{jx}$ to express that fact.

Operators such as the sort-merge join can be decomposed into different sub-operators (e.g., sorting the outer operand, sorting the inner operand, merging). This avoids having to introduce a new variable for each possible combination of situations (e.g., outer operand sorted and inner operand sorted, outer operand sorted but inner operand not sorted, etc.). 

Whether an intermediate result has a certain physical property is determined by the operator which produces the result (and possibly by properties of the input to the producing operation). If a subset $\widetilde{I}\subseteq I$ produces results with a certain property $x$ then we can set $ohp_{j+1,x}=\sum_{i\in\widetilde{I}}jos_{ij}$. As only one of the operators is selected, the aforementioned constraint is valid and sets the left expression either to zero or to one. Certain properties such as interesting orders might be provided automatically by certain tables (if the data on disk has that order). Then we need additional constraints to connect properties to tables.

In summary, we have shown that all of the most important aspects of query optimization can be represented in the MILP formalism.

\subsection{Extended Query Languages}
\label{languageExtensions}

We have already implicitly discussed several extensions to the query language in this section. We discussed how non-binary predicates and projection are supported. This gives us a system handling select-project-join (SPJ) queries. 

It is generally common to introduce query optimization algorithms using SPJ queries for illustration. There are however standard techniques by which an optimization algorithm treating SPJ queries can be extended into an algorithm handling richer query languages. 

The seminal paper by Selinger~\cite{Selinger1979} describes how a complex SQL statement containing nested queries can be decomposed into several simple query blocks that use only selection, projection, and joins; the join order optimization algorithm is applied to each query block separately. Later, the problem of unnesting a complex SQL statement containing aggregates and sub-queries into simple SPJ blocks has been treated as a research problem on its own; corresponding publications focus on the unnesting algorithms and use join order optimization algorithms as a sub-function (e.g., \cite{Muralikrishna1992}).
\section{Formal Analysis}
\label{analysisSec}

State-of-the art MILP solvers use a plethora of heuristics and optimization algorithms which makes it hard to predict the run time for a given MILP instance. It is however a reasonable assumption that optimization time tends to increase in the number of variables and constraints, even if preprocessing steps are sometimes able to eliminate redundant elements. The assumptions that we make here are supported by the experimental results that we present in the next section: we see a strong (even if not perfect) correlation between the number of variables and constraints and the MILP solver performance. 

For the aforementioned reasons, we study in the following how the asymptotic number of variables and constraints in the MILP grows in the dimensions of the query optimization problem from which it was derived. We denote in the following by $n=|Q|$ the number of query tables to join and by $m=|P|$ the number of predicates. By $l=|\Theta|$ we denote the number of thresholds that are used to approximate cardinality values. The following theorems refer to the basic problem model that was presented in Section~\ref{approachSec}. 

\begin{theorem}
The MILP has $O(n\cdot(n+m+l))$ variables.
\end{theorem}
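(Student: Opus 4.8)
The plan is to count the variables introduced in Section~\ref{approachSec} group by group, express each count as a function of $n$, $m$, and $l$, and then sum. The theorem is purely a bookkeeping result: every variable family is tabulated in Table~\ref{varsLinearTable}, so the proof reduces to reading off the index ranges of each family and multiplying. The crucial preliminary observation is that the number of joins $j_{max}+1$ is $n-1$ (a query on $n$ tables requires $n-1$ binary joins, as established in Section~\ref{modelSec}), so every family indexed by $j$ contributes a factor of $O(n)$.

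First I would enumerate the five variable families from Table~\ref{varsLinearTable} and bound each. The operand-membership variables $tio_{tj}$ and $tii_{tj}$ are indexed by a table $t \in Q$ and a join $j$, giving $O(n)\cdot O(n) = O(n^2)$ variables. The predicate-applicability variables $pao_{pj}$ are indexed by a predicate $p \in P$ and a join $j$, giving $O(m)\cdot O(n) = O(nm)$. The logarithmic-cardinality variables $lco_j$ and the approximate-cardinality variables $co_j$, $ci_j$ are indexed by $j$ alone, contributing only $O(n)$ each. The threshold variables $cto_{rj}$ are indexed by a threshold index $r$ (with $0 \le r \le r_{max}$, so $O(l)$ values) and a join $j$, giving $O(l)\cdot O(n) = O(nl)$.

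Summing these contributions yields
\[
O(n^2) + O(nm) + O(nl) + O(n) = O\bigl(n\cdot(n + m + l)\bigr),
\]
where the lone $O(n)$ terms are absorbed into the dominant $O(n^2)$ term. This matches the claimed bound, completing the count.

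I do not expect any genuine obstacle here, since the result is a direct consequence of the explicit variable definitions; the only thing to be careful about is not to double-count and to confirm that the constants hidden in each family's index range truly depend on nothing beyond $n$, $m$, and $l$ (for instance, that the table-pair references $T_1(p), T_2(p)$ in the constraints do not smuggle in extra per-variable dependencies — they do not, since those are parameters of the problem instance, not indices of new variables). The mildest subtlety worth a sentence is to note that the basic model of Section~\ref{approachSec} is the intended scope, so the variable families introduced later in Section~\ref{extensionsSec} (such as $clo_{jl}$, $pjc_{ji}$, or $ohp_{jx}$) are deliberately excluded from the count.
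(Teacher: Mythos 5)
Your proof is correct and follows essentially the same route as the paper's: count $O(n)$ joins and, for each, the $O(n)$ operand-membership variables, $O(m)$ predicate variables, and $O(l)$ threshold/cardinality variables, then sum. Your version is merely more explicit in enumerating the families from Table~\ref{varsLinearTable}; there is no substantive difference.
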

\begin{proof}
Give $n$ tables to join, each complete query plan has $O(n)$ joins. We require $O(n)$ binary variables per join to indicate which tables form the join operands, we require $O(m)$ binary variables per operand to indicate which predicates can be evaluated, and we require $O(l)$ continuous variables per operand to calculate cardinality estimates. 
\end{proof}

\begin{theorem}
The MILP has $O(n\cdot(n+m+l))$ constraints.
\end{theorem}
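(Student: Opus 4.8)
The plan is to mirror the counting argument just used for the number of variables. A complete left-deep plan has $O(n)$ joins, so it suffices to bound, for each join, the number of constraints contributed by each family in Table~\ref{leftDeepConstraints} and then sum over joins. The key observation is that each family's total count is governed by its index structure: a family indexed by the join $j$ alone contributes $O(n)$ constraints, a family indexed by $(j,t)$ contributes $O(n^2)$, a family indexed by $(j,p)$ contributes $O(nm)$, and a family indexed by $(j,r)$ contributes $O(nl)$. I would therefore classify the constraint families of Table~\ref{leftDeepConstraints} into these four index classes and read off the bounds.

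First I would collect the families indexed by the join alone: the single-table constraint $\sum_t tii_{tj}=1$ on each inner operand, the inner-operand cardinality definition $ci_j=\sum_t Card(t)tii_{tj}$, the logarithmic outer cardinality definition for $lco_j$, and the cardinality-recovery constraint $co_j=\sum_r cto_{rj}\delta\theta_r$, together with the single constraint $\sum_t tio_{t0}=1$ fixing the first outer operand; these contribute $O(n)$ constraints in total. Next I would count the doubly-indexed families: the non-overlap constraints $tio_{tj}+tii_{tj}\leq 1$ and the prior-result constraints $tio_{tj}=tii_{t,j-1}+tio_{t,j-1}$ are each indexed by $(j,t)$ and thus contribute $O(n^2)$; the predicate-applicability constraints $pao_{pj}\leq tio_{T_1(p)j}$ and $pao_{pj}\leq tio_{T_2(p)j}$ are indexed by $(j,p)$ and contribute $O(nm)$; and the threshold-activation constraints $lco_j-cto_{rj}\cdot\infty\leq\log(\theta_r)$ are indexed by $(j,r)$ and contribute $O(nl)$. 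Summing over all families gives $O(n)+O(n^2)+O(nm)+O(nl)=O(n\cdot(n+m+l))$.

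Since the argument is a direct enumeration over a fixed, finite list of constraint families, I do not expect a genuine obstacle. The only point that demands care is \emph{completeness}: I must verify that every family in Table~\ref{leftDeepConstraints} is accounted for exactly once and assigned to its correct index class, so that the doubly-indexed families are not undercounted as $O(n)$ and the singly-indexed ones are not overcounted as $O(n^2)$. Provided the classification is exhaustive and correct, the stated bound follows immediately.
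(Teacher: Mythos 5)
Your proof is correct and follows essentially the same route as the paper's: the paper also counts $O(n)$ constraints for table selections, $O(m)$ for predicate applicability, and $O(l)$ for threshold variables per join operand, then multiplies by the $O(n)$ joins. Your version is simply a more explicit, family-by-family enumeration of Table~\ref{leftDeepConstraints}, which adds rigor on the completeness point but no new idea.
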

\begin{proof}
For each join operand we need $O(n)$ constraints to restrict table selections, $O(m)$ constraints to restrict predicate applicability, and $O(l)$ constraints to force the threshold variables to the right value.
\end{proof}

\def\addAllAnalysisPlots#1#2#3{
\addplot+[unbounded coords=jump,bar shift=-0.1cm, bar width=0.1cm, draw=black, fill=blue] table[header=true,col sep=tab, x index=0, y index=4] {plotsdata/#1/#2_MEDIAN_#3_MN};
\addplot+[unbounded coords=jump,bar shift=0cm, bar width=0.1cm, fill=brown, draw=black] table[header=true,col sep=tab, x index=0, y index=3] {plotsdata/#1/#2_MEDIAN_#3_MN};
\addplot+[unbounded coords=jump,bar shift=0.1cm, bar width=0.1cm, fill=red, draw=black] table[header=true,col sep=tab, x index=0, y index=2] {plotsdata/#1/#2_MEDIAN_#3_MN};
}

\def\plotTitle#1#2#3{
\node at (group c#1r#2.north) [yshift=0.15cm] {#3};
}

\newlength{\comparisonPlotHeight}
\setlength{\comparisonPlotHeight}{3.5cm}


\section{Experimental Evaluation}
\label{experimentsSec}


Using existing MILP solvers as base for the query optimizer reduces coding overhead and automatically yields parallelized anytime query optimization due to the features of typical MILP solvers. In this section, we compare the performance of a MILP based optimizer to a classical dynamic programming based query optimization algorithm. 

We describe and justify our experimental setup in Section~\ref{setupSub} and discuss our results in Section~\ref{resultsSub}.

\subsection{Experimental Setup}
\label{setupSub}

\begin{figure}
\centering
\begin{tikzpicture}
\begin{groupplot}[group style={group size=2 by 1, x descriptions at=edge bottom}, width=4.25cm, height=\comparisonPlotHeight,
xlabel=Nr.\ query tables, xlabel near ticks, ylabel near ticks, 
ymode=linear, ybar, 
legend to name=comparisonLeg, legend columns=3,
xtick=data, ymajorgrids, ylabel style={font=\small},
enlargelimits=0.2, max space between ticks=10]
\nextgroupplot[ylabel=Nr.\ Variables, height=\comparisonPlotHeight]
\addAllAnalysisPlots{linearAll}{numVars}{STAR}
\nextgroupplot[legend to name=analysisLegend, legend columns=2, ylabel=Nr.\ Constraints, height=\comparisonPlotHeight]
\addAllAnalysisPlots{linearAll}{numConstr}{STAR}
\addlegendentry{ILP (Low Precision)}
\addlegendentry{ILP (Medium Precision)}
\addlegendentry{ILP (High Precision)}
\end{groupplot}
\end{tikzpicture}

\ref{analysisLegend}
\caption{Median number of variables and constraints of a MILP problem representing the optimization of one query.\label{analysisFig}}
\end{figure}
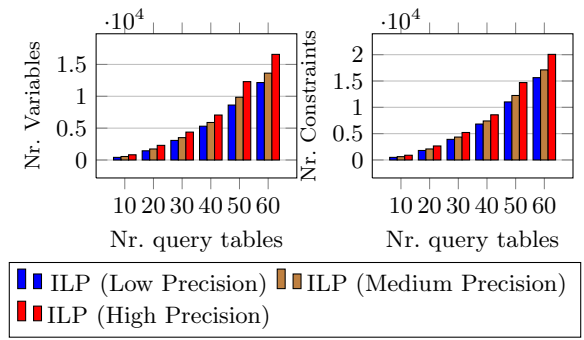

We implemented a prototype of the MILP based optimizer that was introduced in the last sections. We transform query optimization problems into MILP problems and use the Gurobi\footnote{\url{http://www.gurobi.com/}} solver in version 5.6.3 to find optimal or near-optimal solutions to the resulting MILP problems. The MILP solution is read out and used to construct a corresponding query plan. 

We compare this approach against the classical dynamic programming algorithm by Selinger~\cite{Selinger1979}. Dynamic programming algorithms are very popular for exhaustive query optimization~\cite{Moerkotte2006, Moerkotte2008} and are for instance used inside the optimizer of the Postgres database system\footnote{\url{http://www.postgresql.org/}}. 

We compare the two aforementioned algorithms on randomly generated queries. We generate queries according to the method proposed by Steinbrunn et al.~\cite{Steinbrunn1997} which is widely used to benchmark query optimization algorithms~\cite{Steinbrunn1997, Bruno, Trummer2015}. We generate queries of different sizes (referring to the number of tables to join) and with different join graph structures (chain graphs, star graphs, and cycle graphs~\cite{Steinbrunn1997}). We allow cross products which increases the search space size significantly compared to the case without cross products~\cite{Ono1990}. 


We assume that hash joins are used and search the optimal join order. The MILP approach approximates the byte sizes of the intermediate results and therefore the cost of join operations. We evaluate three configurations of our algorithm that differ in the precision by which they approximate cardinality (higher approximation precision requires more MILP variables and constraints). Our first configuration offers high precision and approximates cardinality with a tolerance of factor 3. Our second configuration reduces approximation precision and has a tolerance factor of 10. Our third configuration reduces approximation precision further and has tolerance factor 100. Our most precise configuration uses 60 threshold variables per intermediate result up to 40 table joins and 100 threshold variables per result for queries joining 50 and 60 tables. At the other side of the spectrum is the low-precision configuration which uses 15 threshold variables per result for up to 40 tables and 25 variables for more than 40 tables.

We compare algorithms by the quality of the plans that they produce after a certain amount of optimization time. We allow up to 60 seconds of optimization time and compare the output generated by all algorithms in regular time intervals. The high amount of optimization time seems justified since we compare the algorithms also on very large queries. All compared algorithms need significantly less time than 60 seconds to produce optimal plans for small queries. Investing 60 seconds into optimization can however be well justified if queries are executed on big data where choosing a sub-optimal plan can have devastating consequences~\cite{Soliman2014}. 

During the 60 seconds of optimization time, we compare optimization algorithms in regular intervals according to the following criterion. We compare them based on the factor by which the cost of the best plan found so far is higher than the optimum at most. MILP solvers calculate such bounds based on the integrality gap. The classical dynamic programming algorithm is not an anytime algorithm but after its execution finishes, the produced plan is optimal and hence the optimality factor is one. 

We \textit{do not} compare algorithms based on the cost overhead that the generated plans have compared to the optimum. Instead, we compare them based on an \textit{upper bound} on the relative cost overhead that the algorithm can formally guarantee at a certain point in time. The actual cost overhead is only known in hindsight after optimization has finished (and for some of the query sizes we consider, calculating the truly optimal query plans would cause high computational overheads). The upper bound that we use as criterion is the only value that is known at optimization time and therefore the only value on which termination decisions can be based on for instance (e.g., we could terminate optimization once the query optimizer is certain that the current plan is not more expensive than the optimum by more than factor 2). 

The comparison criterion that we use excludes any randomized or heuristic query optimization algorithms~\cite{Bennett1991, Bruno, ioannidis1990randomized, Steinbrunn1997, Swami1988, Swami1989} from our experimental evaluation: such algorithms cannot give any formal guarantees on the optimality of the produced plans. They cannot even give upper bounds on the relative cost overhead of the generated plans.

Our algorithms (for the MILP approach: the part that transforms query optimization into MILP) are implemented in Java~1.7. The experiments were executed using the Java HotSpot(TM) 64-Bit Server Virtual Machine version on an iMac with  i5-3470S 2.90GHz CPU and 16~GB of DDR3 RAM. 

\def\addAllPerformancePlots#1#2#3#4{
\addplot+[unbounded coords=jump,mark=triangle,mark size=3,draw=black] table[header=true,col sep=tab, x expr=\thisrowno{0}*6+6, y index=1] {plotsdata/#1/#2_MEDIAN_#3_MN_T#4};
\addplot+[unbounded coords=jump,mark=o,mark size=1.5,draw=red] table[header=true,col sep=tab, x expr=\thisrowno{0}*6+6, y index=2] {plotsdata/#1/#2_MEDIAN_#3_MN_T#4};
\addplot+[unbounded coords=jump,mark=square,mark size=1.5,draw=brown] table[header=true,col sep=tab, x expr=\thisrowno{0}*6+6, y index=3] {plotsdata/#1/#2_MEDIAN_#3_MN_T#4};
\addplot+[unbounded coords=jump,mark=asterisk,mark size=2,draw=blue] table[header=true,col sep=tab, x expr=\thisrowno{0}*6+6, y index=4] {plotsdata/#1/#2_MEDIAN_#3_MN_T#4};
}

\def\comparisonPlot#1#2#3{
\addplot+[unbounded coords=jump, bar shift=-0.1cm, bar width=0.1cm] table[header=true,col sep=tab, x index=0, y index=7] {plotsdata/furtherAnalysis/#1M#3#2.txt};
\addlegendentry{NSGA-II}
\addplot+[unbounded coords=jump, bar shift=0cm, bar width=0.1cm] table[header=true,col sep=tab, x index=0, y index=8] {plotsdata/furtherAnalysis/#1M#3#2.txt};
\addlegendentry{II}
\addplot+[unbounded coords=jump, bar shift=0.1cm, bar width=0.1cm] table[header=true,col sep=tab, x index=0, y index=9] {plotsdata/furtherAnalysis/#1M#3#2.txt};
\addlegendentry{RMQ}
}

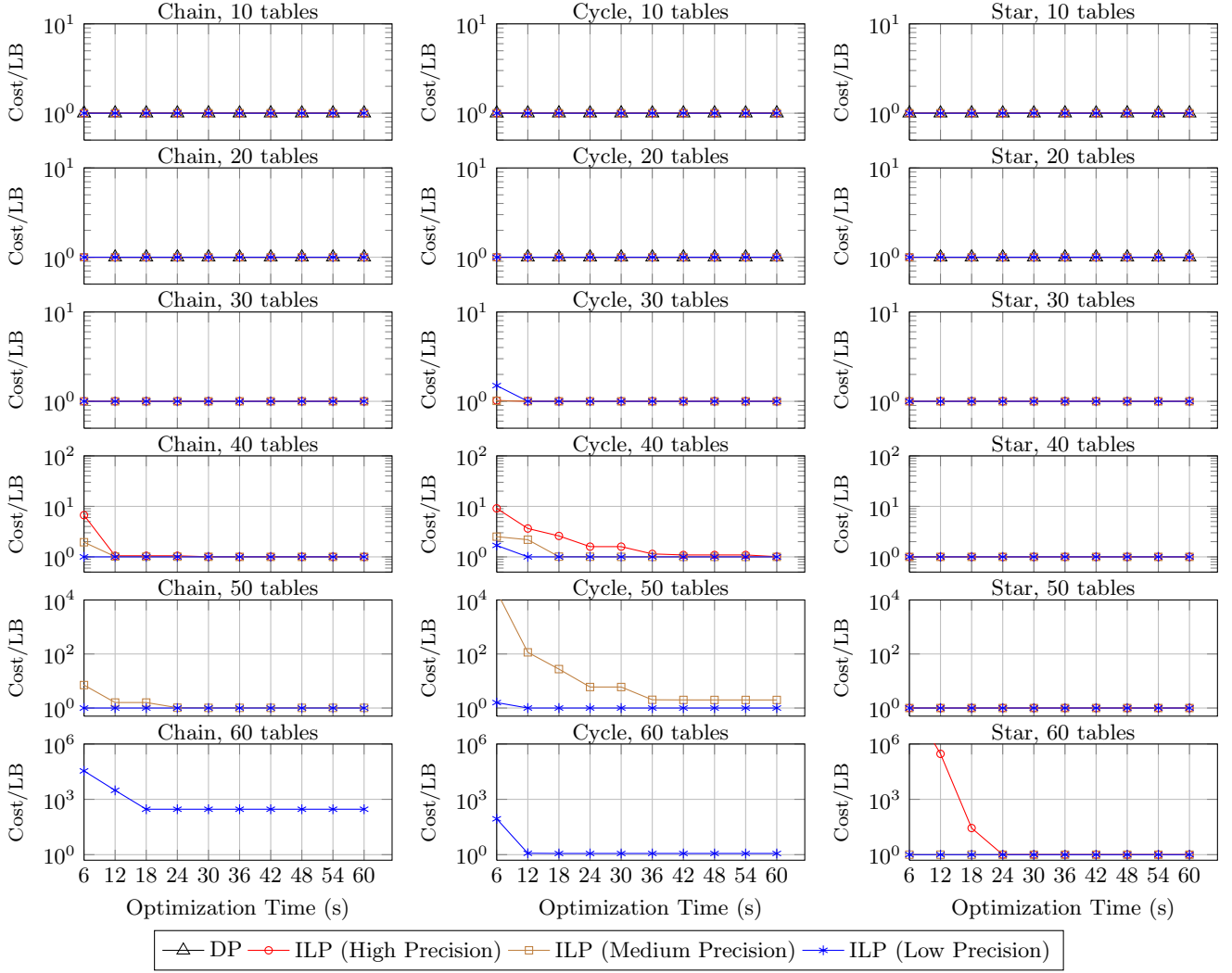
\begin{figure*}[t!]
\centering
\begin{tikzpicture}[declare function={Infinity=inf;}]
\begin{groupplot}[group style={group size=3 by 6, x descriptions at=edge bottom, horizontal sep=1.5cm, vertical sep=0.4cm}, 
width=6cm, height=3.25cm,
xlabel=Optimization Time (s), ylabel=Cost/LB, xlabel near ticks, ylabel near ticks,
ymode=log, xmajorgrids, xmin=6, ymin=1, ymax=10E3,
ymajorgrids, xtick={6,12,18,24,30,36,42,48,54,60},ylabel style={font=\small},
ymin=0.5]
\nextgroupplot[ymax=10]
\addAllPerformancePlots{linearAll}{costGap}{CHAIN}{10}
\nextgroupplot[ymax=10]
\addAllPerformancePlots{linearAll}{costGap}{CYCLE}{10}
\nextgroupplot[ymax=10]
\addAllPerformancePlots{linearAll}{costGap}{STAR}{10}
\nextgroupplot[ymax=10]
\addAllPerformancePlots{linearAll}{costGap}{CHAIN}{20}
\nextgroupplot[ymax=10]
\addAllPerformancePlots{linearAll}{costGap}{CYCLE}{20}
\nextgroupplot[ymax=10]
\addAllPerformancePlots{linearAll}{costGap}{STAR}{20}
\nextgroupplot[ymax=10]
\addAllPerformancePlots{linearAll}{costGap}{CHAIN}{30}
\nextgroupplot[ymax=10]
\addAllPerformancePlots{linearAll}{costGap}{CYCLE}{30}
\nextgroupplot[ymax=10]
\addAllPerformancePlots{linearAll}{costGap}{STAR}{30}
\nextgroupplot[ymax=100]
\addAllPerformancePlots{linearAll}{costGap}{CHAIN}{40}
\nextgroupplot[ymax=100]
\addAllPerformancePlots{linearAll}{costGap}{CYCLE}{40}
\nextgroupplot[ymax=100]
\addAllPerformancePlots{linearAll}{costGap}{STAR}{40}
\nextgroupplot[ymax=10000]
\addAllPerformancePlots{linearAll}{costGap}{CHAIN}{50}
\nextgroupplot[ymax=10000]
\addAllPerformancePlots{linearAll}{costGap}{CYCLE}{50}
\nextgroupplot[ymax=10000]
\addAllPerformancePlots{linearAll}{costGap}{STAR}{50}
\nextgroupplot[ymax=1000000]
\addAllPerformancePlots{linearAll}{costGap}{CHAIN}{60}
\nextgroupplot[ymax=1000000]
\addAllPerformancePlots{linearAll}{costGap}{CYCLE}{60}
\nextgroupplot[ymax=1000000,legend to name=thisPlotLegend, legend columns=4]
\addAllPerformancePlots{linearAll}{costGap}{STAR}{60}
\addlegendentry{DP}
\addlegendentry{ILP (High Precision)}
\addlegendentry{ILP (Medium Precision)}
\addlegendentry{ILP (Low Precision)}
\end{groupplot}

\plotTitle{1}{1}{Chain, 10 tables}
\plotTitle{1}{2}{Chain, 20 tables}
\plotTitle{1}{3}{Chain, 30 tables}
\plotTitle{1}{4}{Chain, 40 tables}
\plotTitle{1}{5}{Chain, 50 tables}
\plotTitle{1}{6}{Chain, 60 tables}

\plotTitle{2}{1}{Cycle, 10 tables}
\plotTitle{2}{2}{Cycle, 20 tables}
\plotTitle{2}{3}{Cycle, 30 tables}
\plotTitle{2}{4}{Cycle, 40 tables}
\plotTitle{2}{5}{Cycle, 50 tables}
\plotTitle{2}{6}{Cycle, 60 tables}

\plotTitle{3}{1}{Star, 10 tables}
\plotTitle{3}{2}{Star, 20 tables}
\plotTitle{3}{3}{Star, 30 tables}
\plotTitle{3}{4}{Star, 40 tables}
\plotTitle{3}{5}{Star, 50 tables}
\plotTitle{3}{6}{Star, 60 tables}
\end{tikzpicture}
\ref{thisPlotLegend}
\caption{Comparing dynamic programming based optimizer versus integer linear programming for left-deep query plans.\label{leftPerformanceFig}}
\end{figure*}

\subsection{Experimental Results}
\label{resultsSub}


We start by analyzing the size of the generated MILP problems. Figure~\ref{analysisFig} shows the number of constraints and variables. We show results for queries with a star-shaped join graph structure while the results for chain and cycle graph structures differ only marginally (the only difference is that cycle graphs require one additional predicate variable per intermediate result compared to star graphs). The ILP configuration with higher approximation precision requires in all cases more variables and constraints. For all configurations, the number of variables and constraints increases with increasing number of query tables. 

Figure~\ref{leftPerformanceFig} shows performance results for left-deep plans. We allow cross product joins. The experimental setup was explained and justified in Section~\ref{setupSub}. The figure shows median values for 20 randomly generated queries. For 10 query tables, all compared algorithms find the optimal plan very quickly. For 20 query tables, the dynamic programming approach already takes more than six seconds in average to find the optimal plan while the MILP approach is faster. With 20 query tables we are reaching the limit of what is usually considered practical by dynamic programming algorithms. Also note that we allow cross product joins which increases the size of the plan space significantly. 

For higher numbers of query tables, up to 60, the dynamic programming approach does not return any plan within one minute of optimization time. Note that increasing the number of tables by 10 increases the number of table sets that the dynamic programming approach must consider by factor $2^{10}=1024$. It is therefore not surprising that this algorithm is not able to optimize queries with 30 tables and more.

All configurations of the MILP approach find optimal or at least guaranteed near-optimal plans for up to 40 tables, often already after a few seconds. For 50 and 60 table joins, all MILP configurations are able to find plans quickly for star join graphs. For cycle graphs, the low-precision configuration finds still optimal plans up to 60 tables while the medium-precision configuration finds near-optimal plans. Both configurations find optimal plans for 50 tables and chain graphs while this is not possible for queries with 60 tables and a chain graph structure. This means that optimization of chain and cycle queries seems to be more challenging for MILP approaches than optimization of star queries. Note that star queries are more difficult to optimize when excluding cross products and applying dynamic programming~\cite{Ono1990}; for MILP approaches it is apparently the opposite.



We conclude that the MILP approach does not only match but even outperforms traditional exhaustive query optimization algorithms for left-deep plan spaces by a significant margin.

\section{Conclusion}
\label{conclusionSec}

Basing newly developed query optimizers on existing MILP solver implementations reduces the size of the optimizer code base and allows to benefit from features such as parallelization and anytime behavior that those solvers encapsulate. 

We have demonstrated how to transform query optimization into MILP. Our experimental results show that MILP approaches can outperform traditional dynamic programming approaches significantly. 

Generally it should be noted that the experimental results in this paper are only snapshots and not intrinsic to the proposed mapping: as new MILP solver generations appear, the performance of our MILP based approach is likely to improve further without having to adapt the mappings.

\section{Acknowledgment}

This work was supported by ERC Grant 279804 and by a European Google PhD fellowship.

\bibliographystyle{abbrv}

\begin{thebibliography}{10}

\bibitem{Agarwal2013}
S.~Agarwal, B.~Mozafari, and A.~Panda.
\newblock {BlinkDB: queries with bounded errors and bounded response times on
  very large data}.
\newblock In {\em European Conf. on Computer Systems}, pages 29--42, 2013.

\bibitem{Beame2014}
P.~Beame, P.~Koutris, and D.~Suciu.
\newblock {Skew in parallel query processing}.
\newblock In {\em PODS}, pages 212--223, 2014.

\bibitem{Bennett1991}
K.~Bennett, M.~Ferris, and Y.~Ioannidis.
\newblock {\em {A genetic algorithm for database query optimization}}.
\newblock 1991.

\bibitem{Modeling2015}
J.~Bisschop.
\newblock {Integer Linear Programming Tricks}.
\newblock In {\em AIMMS: Optimization Modeling}, page 75ff. 215.

\bibitem{Bixby2012}
R.~E. Bixby.
\newblock {A Brief History of Linear and Mixed-Integer Programming
  Computation}.
\newblock {\em Documenta Mathematica}, pages 107--121, 2012.

\bibitem{Bruno}
N.~Bruno.
\newblock {Polynomial heuristics for query optimization}.
\newblock In {\em ICDE}, pages 589--600, 2010.

\bibitem{Chaudhuri2009}
S.~Chaudhuri.
\newblock {Query optimizers: time to rethink the contract?}
\newblock In {\em SIGMOD}, pages 961--968, 2009.

\bibitem{Chaudhuri1999}
S.~Chaudhuri and K.~Shim.
\newblock {Optimization of queries with user-defined predicates}.
\newblock {\em ACM Transactions on Database Systems}, 24(2):177--228, 1999.

\bibitem{Cluet1995}
S.~Cluet and G.~Moerkotte.
\newblock {On the complexity of generating optimal left-deep processing trees
  with cross products}.
\newblock In {\em ICDT}, pages 54--67, 1995.

\bibitem{Dokeroglu}
T.~Dokeroglu, M.~A. Bayır, and A.~Cosar.
\newblock {Integer linear programming solution for the multiple query
  optimization problem}.
\newblock In {\em Information Sciences and Systems}, pages 51--60. 2014.

\bibitem{Ganguly1998}
S.~Ganguly.
\newblock {Design and analysis of parametric query optimization algorithms}.
\newblock In {\em VLDB}, pages 228--238, 1998.

\bibitem{Han2008}
W.-S. Han, W.~Kwak, J.~Lee, G.~M. Lohman, and V.~Markl.
\newblock {Parallelizing query optimization}.
\newblock In {\em VLDB}, pages 188--200, 2008.

\bibitem{Han2009}
W.-S. Han and J.~Lee.
\newblock {Dependency-aware reordering for parallelizing query optimization in
  multi-core CPUs}.
\newblock In {\em SIGMOD}, pages 45--58, 2009.

\bibitem{Hellerstein1993}
J.~M. Hellerstein and M.~Stonebraker.
\newblock {Predicate migration: optimizing queries with expensive predicates}.
\newblock {\em SIGMOD}, 22(2):267--276, 1993.

\bibitem{Hulgeri2002}
A.~Hulgeri and S.~Sudarshan.
\newblock {Parametric query optimization for linear and piecewise linear cost
  functions}.
\newblock In {\em VLDB}, pages 167--178, 2002.

\bibitem{Hulgeri2003}
A.~Hulgeri and S.~Sudarshan.
\newblock {AniPQO: almost non-intrusive parametric query optimization for
  nonlinear cost functions}.
\newblock In {\em VLDB}, pages 766--777, 2003.

\bibitem{ioannidis1990randomized}
Y.~E. Ioannidis and Y.~Kang.
\newblock {Randomized algorithms for optimizing large join queries}.
\newblock In {\em SIGMOD Record}, volume~19, pages 312--321, 1990.

\bibitem{Kaushik2009}
R.~Kaushik, C.~R{\'{e}}, and D.~Suciu.
\newblock {General database statistics using entropy maximization}.
\newblock In {\em Database Programming Languages}, pages 84--99. 2009.

\bibitem{Kemper1994}
A.~Kemper, G.~Moerkotte, K.~Peithner, and M.~Steinbrunn.
\newblock {Optimizing disjunctive queries with expensive predicates}.
\newblock {\em SIGMOD Record}, 23(2):336--347, 1994.

\bibitem{Lawrence1997}
J.~A. Lawrence and B.~A. Pasternack.
\newblock {\em {Applied Management Science}}.
\newblock 1997.

\bibitem{Moerkotte2006}
G.~Moerkotte and T.~Neumann.
\newblock {Analysis of two existing and one new dynamic programming algorithm
  for the generation of optimal bushy join trees without cross products}.
\newblock In {\em VLDB}, pages 930--941, 2006.

\bibitem{Moerkotte2008}
G.~Moerkotte and T.~Neumann.
\newblock {Dynamic programming strikes back}.
\newblock In {\em SIGMOD}, pages 9--12, 2008.

\bibitem{Muralikrishna1992}
M.~Muralikrishna.
\newblock {Improved unnesting algorithms for join aggregate SQL queries}.
\newblock {\em VLDB}, pages 91--102, 1992.

\bibitem{Ono1990}
K.~Ono and G.~Lohman.
\newblock {Measuring the complexity of join enumeration in query optimization}.
\newblock In {\em VLDB}, pages 314--325, 1990.

\bibitem{Papadomanolakis2007}
S.~Papadomanolakis and A.~Ailamaki.
\newblock {An integer linear programming approach to database design}.
\newblock In {\em ICDEW}, pages 442--449, 2007.

\bibitem{Selinger1979}
P.~G. Selinger, M.~M. Astrahan, D.~D. Chamberlin, R.~A. Lorie, and T.~G. Price.
\newblock {Access path selection in a relational database management system}.
\newblock In {\em SIGMOD}, pages 23--34, 1979.

\bibitem{Soliman2014}
M.~a. Soliman, M.~Petropoulos, F.~Waas, S.~Narayanan, K.~Krikellas, R.~Baldwin,
  L.~Antova, V.~Raghavan, A.~El-Helw, Z.~Gu, E.~Shen, G.~C. Caragea,
  C.~Garcia-Alvarado, and F.~Rahman.
\newblock {Orca: A modular query optimizer architectur for big data}.
\newblock In {\em SIGMOD}, pages 337--348, 2014.

\bibitem{Steinbrunn1997}
M.~Steinbrunn, G.~Moerkotte, and A.~Kemper.
\newblock {Heuristic and randomized optimization for the join ordering
  problem}.
\newblock {\em VLDBJ}, 6(3):191--208, 1997.

\bibitem{Swami1989}
A.~Swami.
\newblock {Optimization of large join queries: combining heuristics and
  combinatorial techniques}.
\newblock {\em SIGMOD}, pages 367--376, 1989.

\bibitem{Swami1988}
A.~Swami and A.~Gupta.
\newblock {Optimization of large join queries}.
\newblock In {\em SIGMOD}, pages 8--17, 1988.

\bibitem{Trummer2015a}
I.~Trummer and C.~Koch.
\newblock {An incremental anytime algorithm for multi-objective query
  optimization}.
\newblock In {\em SIGMOD}, pages 1941--1953, 2015.

\bibitem{Trummer2015}
I.~Trummer and C.~Koch.
\newblock {Multi-objective parametric query optimization}.
\newblock {\em VLDB}, 8(3):221--232, 2015.

\bibitem{Vance1996a}
B.~Vance and D.~Maier.
\newblock {Rapid bushy join-order optimization with Cartesian products}.
\newblock {\em SIGMOD}, 25(2):35--46, 1996.

\bibitem{Waas2009}
F.~M. Waas and J.~M. Hellerstein.
\newblock {Parallelizing extensible query optimizers}.
\newblock In {\em SIGMOD}, page 871, 2009.

\bibitem{Yang1997}
J.~Yang, K.~Karlapalem, and Q.~Li.
\newblock {Algorithms for materialized view design in data warehousing
  environment}.
\newblock In {\em VLDB}, pages 136--145, 1997.

\end{thebibliography}

\balance

\end{document}